\numberwithin{equation}{section}
\newcommand{\uloopr}[1]{\ar@'{@+{[0,0]+(-4,5)}@+{[0,0]+(0,10)}@+{[0,0] +(4,5)}}^{#1}}
\newcommand{\uloopd}[1]{\ar@'{@+{[0,0]+(5,4)}@+{[0,0]+(10,0)}@+{[0,0]+ (5,-4)}}^{#1}}
\newcommand{\dloopr}[1]{\ar@'{@+{[0,0]+(-4,-5)}@+{[0,0]+(0,-10)}@+{[0, 0]+(4,-5)}}_{#1}}
\newcommand{\dloopd}[1]{\ar@'{@+{[0,0]+(-5,4)}@+{[0,0]+(-10,0)}@+{[0,0 ]+(-5,-4)}}_{#1}}
\newcommand{\luloop}[1]{\ar@'{@+{[0,0]+(-8,2)}@+{[0,0]+(-10,10)}@+{[0, 0]+(2,2)}}^{#1}}
\DeclareSymbolFont{SY}{U}{psy}{m}{n}
\DeclareMathSymbol{\emptyset}{\mathord}{SY}{'306}
\DeclareMathSymbol{\newtimes}{\mathbin}{SY}{'264}
\newcommand{\fh}{\mathfrak{h}}
\newcommand{\fA}{\mathfrak{A}}
\newcommand{\cA}{{\mathcal A}}
\newcommand{\cB}{{\mathcal B}}
\newcommand{\cF}{{\mathcal F}}
\newcommand{\cG}{{\mathcal G}}         
\newcommand{\cH}{{\mathcal H}}
\newcommand{\cI}{{\mathcal I}}
\newcommand{\cJ}{{\mathcal J}}
\newcommand{\cK}{{\mathcal K}}
\newcommand{\cN}{{\mathcal N}}
\newcommand{\cO}{{\mathcal O}}
\newcommand{\cP}{{\mathcal P}}
\newcommand{\cR}{{\mathcal R}}
\newcommand{\cT}{{\mathcal T}}
\newcommand{\cU}{{\mathcal U}}
\newcommand{\cZ}{{\mathcal Z}}
\renewcommand{\1}{\mathbbm 1}
\newtheorem{theorem}{Theorem}[section]{\bf}{\it}
\newtheorem{proposition}[theorem]{Proposition}{\bf}{\it}
{\bf}{\it}
\newtheorem{lemma}[theorem]{Lemma}{\bf}{\it}
{\bf}{\it}
{\bf}{\it}
{\bf}{\it}
\theoremstyle{definition}
\newtheorem{definition}[theorem]{Definition}
\newtheorem{remark}[theorem]{Remark}
\newtheorem{example}[theorem]{Example}
\theoremstyle{plain}
\theoremstyle{definition}
\newlist{Fenumerate}{enumerate}{1}
\setlist[Fenumerate, 1]{label = \textbf{(F\arabic*)}, ref = {{\bf (F\arabic*)}}}
\newcommand{\R}{\mathbb{R}}
\newcommand{\N}{\mathbb{N}}
\newcommand{\C}{\mathbb{C}}
\newcommand{\BH}{{\mathcal{B}(\mathcal{H})}}
\begin{document}

\title{Notions of infinity in quantum physics}

\author[Fernando Lled\'{o}]{Fernando Lled\'{o}$^{1}$}
\address{Department of Mathematics, University Carlos~III Madrid,
  Avda.~de la Universidad~30, 28911 Legan\'{e}s (Madrid), Spain
  and Instituto de Ciencias Matem\'{a}ticas (CSIC - UAM - UC3M - UCM).}
\email{flledo@math.uc3m.es}

\author[Diego Mart\'{i}nez]{Diego Mart\'{i}nez$^{2}$}
\address{Department of Mathematics, University Carlos~III Madrid,
  Avda.~de la Universidad~30, 28911 Legan\'{e}s (Madrid), Spain
  and Instituto de Ciencias Matem\'{a}ticas (CSIC - UAM - UC3M - UCM).}
\email{lumartin@math.uc3m.es}

\thanks{Supported by research projects MTM2017-84098-P and Severo Ochoa SEV-2015-0554 of the Spanish Ministry of Economy and Competition
(MINECO), Spain.}

\date{\today}

\subjclass[2010]{81T05,43A05,47L40}
\keywords{amenability, proper infiniteness, quantum physics, CAR-algebra}

\maketitle

\begin{center}
 {\em Dedicated to Alberto Ibort on the occasion of his 60th birthday}
\end{center}

\begin{abstract}
In this article we will review some notions of infiniteness that appear in Hilbert space operators and operator algebras.
These include proper infiniteness, Murray von Neumann's classification into
type~$I$ and type~$III$ factors and the class of F\o lner C*-algebras that capture some aspects of amenability. 
We will also mention how these notions reappear 
in the description of certain mathematical aspects of quantum mechanics, quantum field theory and the theory of 
superselection sectors. We also show that the algebra of the canonical anti-commutation relations
(CAR-algebra) is in the class of F\o lner C*-algebras.
\end{abstract}

\section{Introduction}
\label{sec:1}

In this article we will review some situations in which different notions of infinity
manifest in quantum mechanics and quantum field theory. 
To begin let us recall some reasonable and basic definitions of finiteness in set theory 
(cf., \cite[Introduction]{KL15}). A set $X$ can be called {\em finite} if any of these conditions holds:
\begin{Fenumerate}
 \item \label{F1} there is a bijection $\varphi\colon X\to\{1,\dots, n\}$ for some $n\in\N$;
 \item \label{F2} there does not exist a (disjoint) partition $X=X_1\sqcup X_2$ such that $|X|=|X_1|=|X_2|$, where $|\cdot|$ denotes the cardinality
of a set;
 \item \label{F3} every injective map $f\colon X\to X$ is surjective.
\end{Fenumerate}

The characterization \ref{F1} uses the external structure of the natural numbers and is constructive, while \ref{F2} identifies 
finiteness through the absence of a certain kind of decomposition, which resembles a \textit{paradoxical} decomposition. 
The last item \ref{F3} refers to Dedekind's definition of finiteness and is intrinsic to the structure.
All these ideas and, in particular, their negation, reappear in a very natural way in the context of linear operators and operator algebras. This is
how they also enter in the description of some aspects of Quantum Theory.

For infinite sets on which, in addition, a dynamic is defined one can further classify the system according to the 
dichotomy {\em amenable} versus {\em paradoxical}. It must be highlighted that \textit{dynamics} is here understood in a wide sense, such as the
action of a group on a space or as the action of an algebra on itself by left multiplication. 
The idea of amenability was introduced in the context of group actions by von Neumann in 1929 (cf. \cite{Neumann29}) and its absence in the action of
the rotation group on the unit ball $B_1\subset\R^3$
was recognized as a fundamental reason that explains the possibility of paradoxically decomposing $B_1$. This fact eventually came to be known as the
\textit{Banach-Tarski paradox} (cf., \cite{Wagon16,bPaterson88,bRunde02}). 
Since then this dichotomy amenable versus paradoxical has enriched many other fields including algebras, metric spaces
and operator algebras. Roughly speaking, amenable structures have an 
internal approximation in terms of finite substructures (the so-called
\textit{F\o lner sequences}) that have controlled growth with respect to the dynamics considered. It is therefore clear that all finite structures
are normally amenable, while infinite structures might be or not. We refer to 
\cite{KL15,Cec01,Silberstein-Grigorchuk-Harpe-99,Day57,Elek03,Gromov99,ALLW-1,ALLW-2} 
for additional motivation and results on this body of work. 

The aim of this article is to review some results showing the different \textit{degrees of infiniteness} that appears in some situations in Quantum
Theory.
We also bring into this analysis the class of F\o lner C*-algebras that capture some aspects of amenability in the context of operator 
algebras. These algebras can be characterized in terms of a sequence of unital completely positive linear 
maps into matrices which are asymptotically multiplicative. We will show that the CAR-algebra is in this class.
We begin reviewing some notions of infiniteness that appear in the description of Hilbert space operators and operator algebras. 
In particular we introduce notions of proper infiniteness and Murray von Neumann's classification into
type~$I$ and type~$III$ factors. We also recall some important results in local quantum physics in relation to this topic, in particular, Borchers
property or the construction of the field algebra in the theory of superselection sectors.

\section{Operators and operator algebras in Hilbert spaces}
\label{sec:2}

Let $\cH$ be a complex separable Hilbert space and denote by $\BH$ the set of all bounded linear operators on $\cH$. Given an operator $T \in \BH$,
its \textit{operator norm} is given by
 \begin{equation}\label{eq:op-norm}
  \|T\| :=\sup_{\|x\|=1} \|Tx\|\;,
 \end{equation}
where $\|x\|$ is the Hilbert space norm of the vector $x \in\cH$ induced by the scalar product $\langle\cdot,\cdot\rangle$.

\begin{example}\label{ex:1}
 \begin{itemize}
  \item[(i)] If $\cH\cong\C^n$, then $\BH\cong M_n(\C)$. In this case, it is well known that any isometry is necessarily a unitary, i.e., 
  for any $M\in\BH$ with $M^*M=\1$, then $MM^*=\1$. This realizes Dedekind's notion \ref{F3} of finiteness in the context of linear maps, since any
injective map must as well be surjective.
  \item[(ii)] If $\cH\cong\ell_2(\N)$ (the Hilbert space of square summable sequences), denote its canonical basis by $\{e_i\}_{i\in\N}$. 
   The infinite dimension of the Hilbert space has now several consequences that can be understood as a linear analogy to
   Hilbert's Hotel. The following examples of non-unitary isometries can be understood as a negation of the finiteness condition $\ref{F3}$
   in the linear context.
   \begin{itemize}
    \item[a)] {\em Unilateral shift:} Let $S e_i:=e_{i+1}$, $i\in\N$, i.e., 
                               $S\cong
                               \begin{pmatrix}
                                  0 & 0 & 0 & \dots \\
                                  1 & 0 & 0 & \dots \\
                                  0 & 1 & 0 & \dots \\
                                  \vdots  &  \ddots  & \ddots  & \ddots
                               \end{pmatrix}$. 
              Then we have $S^*S=\1$, but $SS^*=\1-P_0$, where $P_0\left(\cdot\right):=\langle e_0,\cdot\rangle \,e_0$ is the range one projection
onto the 
              linear subspace $\C \cdot e_0$. In this case one says that $\1$ is an infinite projection (see Definition~\ref{def:equiv} below).

    \item[b)]\label{ex:1:ii:b} {\em Generators of the Cuntz algebra:} define $S_1e_i:=e_{2i}$ and $S_2e_i:=e_{2i+1}$. These are isometries 
              (i.e., $S_1^*S_1=S_2^*S_2=\1$) and satisfy, in addition, 
              \[
                      S_1^*S_2=0\quad\mathrm{and}\quad S_1S_1^*+S_2S_2^*=\1\;.
              \]
              In other words, the ranges of $S_1$ and $S_2$ are infinite dimensional and mutually orthogonal subspaces of $\ell^2(\N)$, giving a
negation 
              of the finiteness condition \ref{F2}. In this case one says that $\1$ is a properly infinite projection
              (see Definition~\ref{def:equiv} below).
   \end{itemize}

  \item[(iii)] {\em Partial isometries:} A linear map $V\colon\cH\to\cH$ is a partial 
  isometry if $V^*V$ is an orthogonal projection, which is called {\em domain projection}. 
  This condition directly implies that $VV^*$ is also a projection, the
  so-called {\em range projection}. These partial isometries are a generalization of the notion of isometry.
 \end{itemize}

\end{example}

Next we introduce two types of operator algebras that will be important for this article, namely, C*- and von Neumann algebras. 
General references on this topic are, e.g., \cite{bDavidson96} or \cite[Chapter~2]{bBratteli02}.
We call a *-subalgebra $\cA\subset\BH$ a {\em C*-algebra} if it is 
closed with respect to the uniform topology, i.e., the
topology defined by the operator norm $\|\cdot\|$ (cf., Eq.~(\ref{eq:op-norm})). 
Important examples of C*-algebras are those generated by isometries having
mutually orthogonal ranges. For
$n \geq 2$, the Cuntz algebra $\mathcal{O}_n$ is the essentially unique C*-algebra generated by isometries
$S_1,\dots,S_n$ satisfying
\[
 S_i^*S_j=\delta_{ij} \1,\; i,j\in\N\;,\quad\mathrm{and}\quad \sum_{i=1}^n S_iS_i^*=\1\;.
\]
Example \ref{ex:1:ii:b} shows how these isometries can be realized as elements of $\cB(\ell_2(\N))$.

A unital *-subalgebra $\cN\subset\BH$ is a {\em von Neumann algebra} if it is closed under the weak operator topology.
A useful and alternative way to understand this class of algebras is through the notion of commutant of a set of operators.
If $\mathcal{S}$ is a self-adjoint subset of $\BH$ 
(i.e., if $S\in\mathcal{S}\subset\BH$, then $S^*\in\mathcal{S}$), then 
we denote by $\mathcal{S}'$ the {\em commutant} of $\mathcal{S}$
in $\BH$, i.e., the set of all operators in 
$\BH$ commuting with all elements in $\mathcal{S}$. 
Von Neumann's celebrated bicommutant theorem shows that a unital *-subalgebra $\cN\subset\BH$ is a von Neumann algebra
iff $\cN=\cN''$. Therefore, if $\mathcal{S}$ is a self-adjoint subset of $\BH$, then
$\mathcal{S}''$ is the smallest von Neumann algebra
containing $\mathcal{S}$. A von Neumann algebra $\cN$ is called a {\em factor} if it has a trivial center, i.e., if $\cN\cap\cN'=\mathbb{C} \cdot \1$.

Any von Neumann algebra is generated as a norm-closed space by the set of its projections, which we denote by $\cP(\cN)$. Therefore,
the classification we are interested in of von Neumann algebras is based on the classification of $\cP(\cN)$.
For the purpose of this article, it is enough to assume that the von Neumann algebra $\cN$ is a (nonzero) factor, since general von Neumann algebras
can be canonically decomposed into factors.

\begin{definition}\label{def:equiv}
Let $\cN$ be a factor and denote by $\cP(\cN)$ its lattice of orthogonal projections in $\cN$. All the following definitions are \textit{modulo
$\cN$}, that is, depend on $\cN$. For $P, Q  \in \cP(\cN)$ we say
\begin{enumerate}
\item[(i)] 
$P$ is \emph{minimal} if $P \neq 0$ and for any projection $P_0 \in \cP(\cN)$, $P_0 \leq P$ implies either $P_0 = 0$ or $P_0 = P$.
\item[(ii)] $P \sim Q$ if there exists a partial isometry $V \in \cN$ such that 
$P = V^*V$ and $Q = VV^*$. 
The relation $\sim$ is called also Murray von Neumann equivalence.

\item[(iii)] $P$ is \emph{finite} ($\mathrm{mod}\,\cN$) if the only projection $P_0 \in \cP(\cN)$ with $P \sim P_0 \leq P$ is the projection $P$
itself.
  
  If $P$ is not finite then it is called \emph{infinite} ($\mathrm{mod}\,\cN$). That is, there is a $P_0 \in \cP(\cN)$ such that $P \sim P_0 < P$,
namely, $P$ is equivalent to a proper subprojection of itself.

  $P$ is \textit{properly infinite} if there exist $P_1,P_2\in\cP(\cN)$
  such that $P\sim P_1 \sim P_2$, $P_1 + P_2 \leq P$ and $P_1 P_2 = 0$, i.e., $P_1\cH\perp P_2\cH$.

\item[(iv)] A factor $\cN$ is called \emph{finite} (respectively, \emph{infinite} or \emph{properly infinite}) if $\1$ is a finite (respectively, 
infinite or properly infinite) projection.
\end{enumerate}
\end{definition}

\begin{remark}\label{rem:min}
\begin{enumerate}
 \item[a)] The definition of finite, infinite and properly infinite projections can be stated similarly in the 
 context of C*-algebras. 
It is clear from Example~\ref{ex:1} that $\1\in M_n(\C)$ is a finite projection. On the contrary
$\cB(\ell_2(\N))$ is an infinite C*-algebra via the equivalence $\1\sim \1-P_0 < \1$.

Finally, the Cuntz algebras $\cO_n$ (and any C$^*$-algebra containing them), are \textit{the} prototypes of properly infinite C$^*$-algebras, since we
have from Example~\ref{ex:1} that $S_1 S_1^* + \dots + S_n S_n^* = \1$ while
\[
 \1 = S_1^*S_1 = \dots = S_n^*S_n \quad \mathrm{and} \quad S_i^*S_j = \delta_{ij} \1.
\]

\item[b)]
It follows from the definition that any minimal projection in a von Neumann algebra is automatically
finite. The most prominent example of minimal projection is the range one projection $P_x(\cdot):=\langle x,\cdot\rangle \,x$, defined for any
$x\in\cH$ with $\|x\|=1$. 

It should be noted that if $P$ is a minimal projection in a von Neumann algebra
$\cN$, then the corner algebra is one-dimensional, i.e., $P\cN P=\C P$.
Moreover, all minimal projections are equivalent.
\end{enumerate}
\end{remark}

According to the properties of the lattice of projections we mention next some large subclasses of factors.
\begin{definition}
 Let $\cN$ be a factor and $\cP(\cN)$ its lattice of projections.
 \begin{itemize}
  \item[(i)] $\cN$ is said to be of {\em type~I} if $\cP(\cN)$ contains a minimal nonzero projection.
  \item[(ii)] $\cN$ is said to be of {\em type~III} if $\cP(\cN)$ contains no nonzero finite projection. 
 \end{itemize}
\end{definition}

Type~$III$ factors show, roughly speaking, the highest degree of infiniteness. In fact, for this class of algebras
any nonzero projection admits the following halving property (which can be understood as a negation of \ref{F2}
in the linear context).

\begin{lemma}
 Let $\cN$ be a factor and $\cP(\cN)$ its lattice of projections. Then $P\in\cP(\cN)$ is infinite if and only if
 $P$ admits the following decomposition 
 \[ 
  P=(P-Q)+Q\quad\mathrm{for~some}\quad Q\leq P\quad\mathrm{and}\quad P\sim Q\sim (P-Q)\;.
 \]
\end{lemma}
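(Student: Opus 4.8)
The plan is to prove the two implications separately; the backward direction is immediate, and the forward direction carries all the content.

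\emph{Backward implication.} Assume $P\neq 0$ (the zero projection is finite and is tacitly excluded) and that $P=(P-Q)+Q$ with $Q\leq P$ and $P\sim Q\sim (P-Q)$. If $Q=P$ then $P-Q=0$, and $P\sim (P-Q)=0$ would force $P=0$; hence $Q<P$ is a \emph{proper} subprojection with $P\sim Q$, which is exactly the statement that $P$ is infinite in the sense of Definition~\ref{def:equiv}.

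\emph{Forward implication.} Suppose $P$ is infinite, so there are $P_0\in\cP(\cN)$ and a partial isometry $V\in\cN$ with $V^*V=P$, $VV^*=P_0$ and $R:=P-P_0\neq 0$. I would proceed in three steps. \textbf{(1)} The projections $R_n:=V^nR(V^*)^n$, $n\geq 0$, are mutually orthogonal subprojections of $P$, each Murray--von Neumann equivalent to $R$ (an equivalence being implemented by $V^nR$); thus $P$ dominates an infinite orthogonal family of copies of $R$. \textbf{(2)} By Zorn's lemma, enlarge $\{R_n\}_{n\geq 0}$ to a \emph{maximal} orthogonal family $\{F_\alpha\}_{\alpha\in A}$ of subprojections of $P$ with $F_\alpha\sim R$ for all $\alpha$; then $A$ is infinite. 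Put $S:=\sum_\alpha F_\alpha$ and $F:=P-S$, so $P=S+F$. The key claim is $F\precsim R$: since $\cN$ is a factor the comparison theorem applies to the pair $F,R$, and if $R\precsim F$ then $F$ would contain a subprojection equivalent to $R$, orthogonal to every $F_\alpha$, contradicting maximality. \textbf{(3)} Because $A$ is infinite and $F\precsim R\sim F_\alpha$, the Cantor--Schr\"oder--Bernstein theorem for projections yields $S+F\sim S$ --- one moves $F$ isometrically into a subprojection of some $F_{\alpha_0}$ and shifts an infinite subsequence $F_{\alpha_0}\mapsto F_{\alpha_1}\mapsto\cdots$ along to make room, leaving the other $F_\beta$ fixed --- hence $P=S+F\sim S$. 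Finally split $A=A_1\sqcup A_2$ with $|A_1|=|A_2|=|A|$ and set $Q:=\sum_{\alpha\in A_1}F_\alpha\leq S\leq P$; then $Q\sim S\sim P$, and by the same absorption argument $P-Q=\sum_{\alpha\in A_2}F_\alpha+F\sim\sum_{\alpha\in A_2}F_\alpha\sim S\sim P$. Thus $Q\leq P$, $P=(P-Q)+Q$, and $P\sim Q\sim(P-Q)$, as required.

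\emph{Main obstacle.} The delicate point is step (2): a more naive approach --- iterating $P\sim P_0<P$ to peel off copies of $R$ one at a time --- produces a residual projection (the strong limit of a decreasing sequence of copies of $P$) that need be neither zero nor obviously comparable to $R$, and without controlling it one can neither conclude $P\sim S$ nor symmetrically distribute the pieces. Replacing the iteration by a Zorn--maximal family and invoking the comparison theorem in the factor $\cN$ is precisely what pins down $F\precsim R$, after which Cantor--Schr\"oder--Bernstein finishes the halving. Alternatively, one may simply quote the classical fact that in a factor every infinite projection is properly infinite and then deduce the halving in one line: if $P_1,P_2\leq P$ are orthogonal with $P_1\sim P_2\sim P$, take $Q:=P_1$; since $P-Q\geq P_2\sim P$, Cantor--Schr\"oder--Bernstein forces $P-Q\sim P$.
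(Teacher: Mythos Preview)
The paper states this lemma without proof; it is quoted as a classical fact from the Murray--von~Neumann theory and immediately followed by the sentence ``For simplicity we will focus in this article only on these two classes of factors.'' So there is no argument in the paper to compare against.

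Your proof is correct and is essentially the standard one. The backward direction is immediate as you note. For the forward direction, the three steps are sound: (1) iterating the partial isometry $V$ produces the orthogonal family $R_n=V^nR(V^*)^n$ of copies of $R$ under $P$; (2) enlarging to a maximal such family and using the comparison theorem in the factor $\cN$ correctly forces the remainder $F=P-S$ to satisfy $F\precsim R$ (otherwise a copy of $R$ sits inside $F$, contradicting maximality); (3) the Hilbert-hotel shift combined with Cantor--Schr\"oder--Bernstein then gives $P\sim S$ and the desired halving $P\sim Q\sim P-Q$. The alternative you sketch at the end---quoting directly that in a factor every infinite projection is properly infinite and then applying Cantor--Schr\"oder--Bernstein once---is equally valid and is how the result is most often cited. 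Nothing needs fixing.
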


For simplicity we will focus in this article only on these two classes of factors. Type~$II$ factors (those having no minimal projections
but having nonzero finite projections) are also important in describing certain aspects quantum theory 
(see, e.g., \cite{Keyl06,Naaijkens11}).

\section{F\o lner C*-algebras}
\label{sec:3}

Motivated by the dichotomy amenable versus paradoxical in group theory we will introduce in this section the class of 
F\o lner C*-algebras. These algebras correspond to the amenable groups, in the sense of having a good 
internal approximation in terms of matrices that have controlled growth with respect to the dynamics given by the product.
We will also define the notion of algebraic amenability and some relation to the
class of F\o lner C*-algebras.
These ideas will be used in the next section.

For the next definition, recall that a tracial state on a C$^*$-algebra $\cA$ is a positive and normalized functional $\tau\colon\cA\to\C$
that satisfies the usual tracial property $\tau(AB)=\tau(BA)$ for any $A,B\in\cA$. In the next definition we specify 
the subclass of amenable traces (see, e.g., \cite[Chapter~6]{bBrown08}).

\begin{definition}
Let $\cA\subset\BH$ be a unital and separable C*-algebra. $\cA$ is called a {\em F\o lner C*-algebra} if it has an {\em amenable 
trace} $\tau$, i.e., a tracial state on $\cA$ that extends to a state $\psi$ on $\BH$ that has $\cA$ in its centralizer, i.e.,
\[
   \tau=\psi|_{\cA}\quad\mathrm{and}\quad \psi(XA)=\psi(AX)\;, \;A\in\cA\;,\; X\in\BH\;.
\]
\end{definition}

From this definition it follows immediately that any unital C$^*$-subalgebra of
a F\o lner C$^*$-algebra is again in this class
and that any finite dimensional algebra is a F\o lner C$^*$-algebra, since the usual normalized trace of a matrix will do.

\begin{remark}
 The state $\psi$ in the preceding definition is called {\em hypertrace} in the literature and this class of algebras is also referred
 as weakly hypertracial (see \cite{Bedos95} and references therein). The preceding definition is equivalent to the intrinsic definition
 of an abstract F\o lner C$^*$-algebra $\cA$ in terms of a sequence of unital completely positive linear 
 maps into matrices $\varphi_n\colon\cA\to M_{k(n)}$ which are asymptotically multiplicative. This approach shows explicitly
 the finite approximation scheme of this class of algebras (cf., \cite[Theorem~4.3]{AL14}). Moreover, this class of algebras
 are also relevant in problems of spectral approximation (cf., \cite{Arveson94,Bedos97,Lledo13,LledoYakubovich13}).
\end{remark}

We will conclude by introducing the notion of algebraically amenable algebras. We will restrict to the case of subalgebras of 
C$^*$-algebras, but the definition and results are true for arbitrary algebras over arbitrary fields (cf. \cite{Elek03,ALLW-1,Gromov99}).

\begin{definition}\label{def:alg-amenable}
Let $\fA\subset\cA$ be a *-subalgebra of a C$^*$-algebra $\cA$. We call $\mathfrak{A}$ algebraically amenable if there is 
a sequence $\left\{W_k\right\}_{k = 1}^{\infty}$ of finite dimensional subspaces of $\mathfrak{A}$ satisfying 
\[
 \lim_{k\to\infty}\frac{\mathrm{dim}(AW_k+W_k)}{\mathrm{dim}(W_k)}=1\;, \quad A\in \mathfrak{A}\;.
\]
\end{definition}

Next we mention an important relation between algebraic amenability and the
class of F\o lner C$^*$-algebras.
For a complete proof we refer to \cite[Theorem~3.17]{ALLW-2}.
\begin{theorem}\label{teo:alg-amenable-Foelner}
Let $\mathfrak{A}\subset \cA$ be a dense *-subalgebra of a unital separable C$^*$-algebra $\cA$.
If $\mathfrak{A}$ is algebraically amenable, then $\cA$ is a F\o lner C$^*$-algebra.
\end{theorem}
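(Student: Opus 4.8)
The plan is to extract from the algebraic amenability of $\mathfrak{A}$ a \emph{F\o lner sequence of finite-rank projections} on a Hilbert space — projections that asymptotically commute with $\cA$ in Hilbert--Schmidt norm — and then to average the normalised traces against that sequence to produce a hypertrace. To set this up, recall that since $\cA$ is separable it carries a faithful state $\phi$ (take a countable weak-$*$ dense family of states in the compact metrisable state space and form $\sum_n 2^{-n}\rho_n$). Pass to the GNS triple $(\pi_\phi,\cH_\phi,\xi_\phi)$ and identify $\cA$ with $\pi_\phi(\cA)\subset\cB(\cH_\phi)$, a faithful representation on a separable Hilbert space. The crucial point is that the evaluation map $\iota\colon\cA\to\cH_\phi$, $\iota(T):=T\xi_\phi$, is \emph{injective}, since $\|\iota(T)\|^2=\phi(T^*T)$; hence $\iota$ maps each finite-dimensional subspace of $\mathfrak{A}$ isomorphically onto a subspace of $\cH_\phi$. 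Taking the F\o lner subspaces $\{W_k\}_{k\ge1}$ of $\mathfrak{A}$ from Definition~\ref{def:alg-amenable} and setting $V_k:=\iota(W_k)=W_k\xi_\phi$, the relation $A\,\iota(w)=\iota(Aw)$ together with linearity and injectivity of $\iota$ gives $\dim V_k=\dim W_k$ and $\dim(AV_k+V_k)=\dim(AW_k+W_k)$, so
\[
 \frac{\dim\!\big(AV_k+V_k\big)}{\dim V_k}=\frac{\dim(AW_k+W_k)}{\dim W_k}\xrightarrow{\,k\to\infty\,}1\,,\qquad A\in\mathfrak{A}\,.
\]

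Next I would turn this approximate invariance of subspaces into approximate centrality of projections. Let $P_k\in\cB(\cH_\phi)$ be the orthogonal projection onto $V_k$ and put $\varepsilon_k(A):=\dim(AV_k+V_k)/\dim V_k-1\to0$. Splitting $[P_k,A]=P_kA(\1-P_k)-(\1-P_k)AP_k$ and noting that $(\1-P_k)AP_k$ has range inside $(\1-P_k)(AV_k+V_k)$, hence rank at most $\varepsilon_k(A)\dim V_k$, and symmetrically for $A^*$ — here one uses that $\mathfrak{A}$ is a $*$-subalgebra, so $A^*\in\mathfrak{A}$ and $\varepsilon_k(A^*)\to0$ as well — a routine rank-and-operator-norm estimate yields, with $\|S\|_2:=\tr(S^*S)^{1/2}$ and $\|P_k\|_2^2=\dim V_k$,
\[
 \frac{\|[P_k,A]\|_2}{\|P_k\|_2}\;\le\;\|A\|\,\sqrt{\varepsilon_k(A)+\varepsilon_k(A^*)}\;\xrightarrow{\,k\to\infty\,}0\,,\qquad A\in\mathfrak{A}\,,
\]
and, by norm-density of $\mathfrak{A}$ in $\cA$, for every $A\in\cA$.

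Finally I would run the standard averaging argument. Define states $\psi_k$ on $\cB(\cH_\phi)$ by $\psi_k(X):=\tr(P_kX)/\tr(P_k)$ and let $\psi$ be a weak-$*$ cluster point of $\{\psi_k\}$ in the weak-$*$ compact state space of $\cB(\cH_\phi)$. For $A\in\cA$ and $X\in\cB(\cH_\phi)$, cyclicity of the trace gives $\tr(P_kXA)-\tr(P_kAX)=\tr\!\big(X[A,P_k]\big)$, and since $[A,P_k]$ has rank at most $2\dim V_k$ one obtains
\[
 |\psi_k(XA)-\psi_k(AX)|\;\le\;\sqrt{2}\,\|X\|\,\frac{\|[P_k,A]\|_2}{\|P_k\|_2}\;\longrightarrow\;0
\]
along the full sequence, hence along the subnet defining $\psi$. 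Therefore $\psi(XA)=\psi(AX)$ for all $A\in\cA$ and $X\in\cB(\cH_\phi)$; specialising $X$ to elements of $\cA$ shows that $\tau:=\psi|_{\cA}$ is a tracial state on $\cA$ admitting $\psi$ as hypertrace, i.e.\ $\cA$ is a F\o lner C$^*$-algebra. (That this is verified in the faithful representation $\pi_\phi$ rather than the ambient one is immaterial, existence of an amenable trace being intrinsic to $\cA$; cf.\ the Remark above.)

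The step I expect to be the real obstacle is the first one: algebraic amenability only supplies finite-dimensional subspaces of the \emph{abstract} algebra $\mathfrak{A}$, and under an arbitrary faithful representation these could collapse, ruining the dimension ratios; working in the GNS representation of a \emph{faithful} state is exactly what keeps the evaluation map $\iota$ injective and thus lets one manufacture genuine finite-rank projections on a Hilbert space with the dimension bookkeeping intact. The remaining two steps are, respectively, the familiar "approximately invariant subspace $\Rightarrow$ F\o lner projection" lemma and the classical construction of a hypertrace by averaging against a F\o lner sequence.
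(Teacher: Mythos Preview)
The paper does not actually prove this theorem; immediately before the statement it says ``For a complete proof we refer to \cite[Theorem~3.17]{ALLW-2}.'' So there is nothing in the paper to compare against directly.

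Your argument is correct and is essentially the standard route one finds in the cited reference: transport the algebraic F\o lner subspaces $W_k\subset\mathfrak{A}$ into a faithful GNS representation via $\iota(T)=T\xi_\phi$, use injectivity of $\iota$ (coming from faithfulness of $\phi$) to preserve the dimension ratios, deduce that the orthogonal projections $P_k$ onto $\iota(W_k)$ form a F\o lner sequence in the Hilbert--Schmidt sense, and finally pass to a weak-$*$ cluster point of $X\mapsto\tr(P_kX)/\tr(P_k)$ to obtain a hypertrace. Your rank estimate $\operatorname{rank}\big((\1-P_k)AP_k\big)\le\varepsilon_k(A)\dim V_k$, the use of $A^*\in\mathfrak{A}$ for the other off-diagonal block, the $\|\cdot\|_1\le\sqrt{\operatorname{rank}}\,\|\cdot\|_2$ bound with $\operatorname{rank}[A,P_k]\le 2\dim V_k$, and the final appeal to the representation-independence of the F\o lner property (the Remark preceding Definition~\ref{def:alg-amenable}) are all in order.
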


\section{Quantum physics}
\label{sec:4}

In the mathematical description of a physical theory one needs to specify the set of observables, the set of
states and, possibly, the family of symmetries of the theory, typically described in terms of a group action. 
For a description of a quantum theory (as opposed to a classical theory) one can use the language of {\em non-commutative}
operator algebras and their state space. Symmetries are then incorporated to this setting via a representation of the corresponding
group in terms of automorphisms of the operator algebra. These representations are typically implemented in terms of unitary representations of the
group
(see, e.g., \cite[Chapters~2 and 3]{bBratteli02} or \cite[Part~I]{bLandsman98}).
One of the conceptual advantages of (non-commutative) C*-algebras
is the neat distinction between
the abstract algebra, whose self-adjoint elements correspond to observables, and its state space and the corresponding
representations on a concrete Hilbert space. This point of
view particularly pays off in Quantum Field Theory,
where there is an abundance of inequivalent representations associated with
abstract observables
(cf. \cite{bEmch72}; see also Subsection~\ref{subsec:LQP} below).

\subsection{Type~$I$ algebras and Quantum Mechanics}
\label{subsec:QM}

The most elementary example of a type~$I$ factor is $\BH$, where $\cH$ is a finite or (separable) infinite dimensional Hilbert space.
Many situations in Quantum Mechanics can be described in terms of this example. Pure states correspond in 
this context to minimal projections and mixed states are described in terms of normalized and positive trace class operators.

We begin by making precise the fact that $\BH$ is, in fact, the prototype of this kind of factors. It is illustrative
to give a sketch of the proof since it shows how the minimality condition is used.

\begin{proposition}
Let $\cN\subset\BH$ be a factor of type~$I$. Then there exist separable Hilbert spaces $\cK_1$ and $\cK_2$ and a unitary 
$U\colon\cH \to \cK_1 \otimes \cK_2$ with $U\cN U^* = \mathcal{B}(\cK_1) \otimes  \1$.
\end{proposition}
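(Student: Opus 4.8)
The plan is to manufacture a complete system of matrix units inside $\cN$ from minimal projections and to use it to factorize $\cH$ as a tensor product. First I would invoke the type~$I$ hypothesis to choose a minimal projection $P_1\in\cP(\cN)$; by Remark~\ref{rem:min} the corner $P_1\cN P_1$ is one-dimensional and all minimal projections of $\cN$ are Murray--von~Neumann equivalent. Using Zorn's lemma, extend $\{P_1\}$ to a maximal family $\{P_i\}_{i\in I}$ of pairwise orthogonal projections of $\cN$, each equivalent to $P_1$; since $\cH$ is separable and the $P_i$ are nonzero, the index set $I$ is countable. The point is that $\sum_{i\in I}P_i=\1$: writing $Q:=\1-\sum_i P_i$ and assuming $Q\neq 0$, the comparison theorem for projections in a factor gives either a nonzero subprojection of $Q$ equivalent to $P_1$ --- which could be adjoined to the family, contradicting maximality --- or a nonzero subprojection $Q'\leq P_1$ with $Q'\sim Q$, forcing $Q'=P_1$ by minimality, so that $Q\sim P_1$ and again $Q$ could be adjoined. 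Hence $Q=0$.

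Next I would fix partial isometries $V_i\in\cN$ with $V_i^*V_i=P_1$ and $V_iV_i^*=P_i$ (taking $V_1=P_1$), and note the relations $V_j^*V_i=\delta_{ij}P_1$ and that $E_{ij}:=V_iV_j^*$ form a system of matrix units in $\cN$ with $\sum_i E_{ii}=\1$. Set $\cK_1:=\ell^2(I)$ with orthonormal basis $\{\delta_i\}_{i\in I}$ and $\cK_2:=P_1\cH$; both are separable. Define $U\colon\cH\to\cK_1\otimes\cK_2$ by $U\xi:=\sum_{i\in I}\delta_i\otimes V_i^*\xi$ (with $V_i^*\xi\in P_1\cH$). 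From $\|V_i^*\xi\|^2=\langle\xi,P_i\xi\rangle$ and $\sum_iP_i=\1$ one gets $\|U\xi\|=\|\xi\|$, and the identity $U(V_i\eta)=\delta_i\otimes\eta$ for $\eta\in\cK_2$ shows that the range of $U$ is dense; being isometric, $U$ is therefore unitary, with $U^*(\delta_i\otimes\eta)=V_i\eta$.

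It remains to check $U\cN U^*=\B(\cK_1)\otimes\1$, which I would do by a direct computation on elementary tensors. For $A\in\cN$ the element $V_j^*AV_i$ lies in $P_1\cN P_1=\C P_1$, say $V_j^*AV_i=a_{ji}(A)\,P_1$, so that $UAU^*(\delta_i\otimes\eta)=\big(\sum_j a_{ji}(A)\delta_j\big)\otimes\eta$; in particular $UAU^*$ commutes with $\1\otimes B$ for all $B\in\B(\cK_2)$, i.e.\ $UAU^*\in(\1\otimes\B(\cK_2))'=\B(\cK_1)\otimes\1$. Conversely, the same computation gives $UE_{ij}U^*=e_{ij}\otimes\1$ with $e_{ij}$ the standard matrix units of $\B(\cK_1)$, and since these generate $\B(\cK_1)$ as a von Neumann algebra while $U\cN U^*$ is weakly closed, one obtains $\B(\cK_1)\otimes\1\subseteq U\cN U^*$. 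Combining the two inclusions finishes the proof. The step I expect to be the real obstacle --- and the place where minimality is genuinely used --- is the construction of the maximal orthogonal family of equivalent minimal projections together with the argument that it sums to $\1$; once the matrix units are in hand, the factorization of $\cH$ and the identification of $U\cN U^*$ are routine bookkeeping.
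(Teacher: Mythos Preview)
Your argument is correct and follows essentially the same route as the paper: choose a maximal orthogonal family of minimal (equivalently, mutually equivalent) projections summing to $\1$, build matrix units from partial isometries implementing their equivalence, and use these to define the unitary $U$ onto $\ell^2(I)\otimes P_1\cH$ carrying $\cN$ onto $\cB(\ell^2(I))\otimes\1$. You are in fact more careful than the paper in justifying $\sum_i P_i=\1$ via comparison and in verifying both inclusions for $U\cN U^*$; the only cosmetic difference is that your $V_i$ are the adjoints of the paper's $V_{1i}$.
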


\begin{proof}
Let $\{P_j\}_{j \in J} \subset \cP(\cN)$ be a maximal family of mutually orthogonal minimal projections. By maximality it follows
that $\cH \cong \bigoplus_{j \in J} P_j \cH$. 
Moreover, by minimality of projections, all $P_i, P_j$ must be equivalent for any pair $i, j \in J$. 
Therefore, there are partial isometries $V_{1j} \in \cN$ with $V_{1j}V_{1j}^* = P_1$ and $V_{1j}^*V_{1j} = P_j$, $j \in J$. 
This implies that $\cN$ is generated by the set $\{V_{1j} \mid j \in J\}$ since we have
\begin{equation}
\cN \ni N = \sum_{i,j \in J} P_i N P_j = \sum_{i,j \in J}\lambda_{ij}V_{1i}^*V_{1j}\;,
\end{equation}
where the coefficients $\lambda_{ij} \in \mathbb{C}$ are specified by the relation
\begin{equation*}
V_{1i}P_i N P_jV_{1j}^*  \in P_1 N P_1 = \mathbb{C}P_1\;,
\end{equation*}
which, again, uses the minimality of $P_1$. In fact, note that
\begin{equation*}
V_{1i}P_i N P_jV_{1j}^* = \lambda_{ij}P_1 \quad\mathrm{and hence}\quad P_i N P_j = \lambda_{ij}V_{1i}^*P_1V_{1j}^* = \lambda_{ij}V_{1i}^*V_{1j}\;.
\end{equation*}

Finally, consider the discrete set $\cJ = \{1,2,\ldots |J|\}$ with $|J|\in\N\cup\{\infty\}$ and define the unitary map
\[
U^*\colon \ell_2(\cJ,P_1\cH) \to \cH
\]
by means of $U^*\xi := \sum_j V_{1j}^* \xi_j$, where $\xi = (\xi_j)_{j=1}^{|J|} \in \ell_2(\cJ,P_1\cH)$. 
Using now the equivalence
$\ell_2(\cJ,P_1\cH) \cong \ell_2(\cJ) \otimes P_1\cH $
one can show that the algebra generated by 
$\{UV_{1j} U^* \mid j \in J \}$ is isomorphic to $\mathcal{B}(\ell_2(\cJ)) \otimes \1$, because
\begin{equation*}
UV_{1i}^*U^*U V_{1j}U^*= UV_{1i}^*V_{1j}U^* \cong E_{ij} \otimes \1
\end{equation*}
where $\{ E_{ij} \mid i,j \in J \}$ is a set of matrix units in $\ell_2(\cJ)$. 
\end{proof}

\begin{remark}
From the results mentioned in Section~\ref{sec:2} it is clear that $\BH$ with dim$\cH=\infty$ is an infinite as well as properly infinite 
C*-algebra. Nevertheless, observe that the structure of type~$I$ factors allows to have subalgebras of F\o lner type.
For instance, take two non-commuting range one projections $P, Q \in \mathcal{B}\left(\mathcal{H}\right)$, 
the von Neumann algebra generated by them will be finite-dimensional, and hence F\o lner. Note that this reasoning is not possible
in the context type~$III$ von Neumann algebras.
\end{remark}

\subsection{The CAR-algebra}\label{subsec:CAR}

In this section we give a proof that the C$^*$-algebras associated to the canonical anti-commutation relations (CAR-algebras) are, in fact, 
F\o lner C$^*$-algebras. We begin by recalling its definition and some standard properties (see, e.g., \cite[Section~5.2.2]{bBratteli02}).  

Let $\fh$ be a complex separable Hilbert space with scalar product
$\langle\cdot,\cdot\rangle$. We denote by CAR$(\fh)$ the algebraically unique C*-algebra generated by
$\1$ and $a(f)$, $f\in \fh$, such that the following relations hold:
\begin{itemize}
\item[{\rm (i)}] The map $\fh\ni f\mapsto a(f)$ is antilinear.
\item[{\rm (ii)}] $a(f_1) a(f_2)+a(f_2)a(f_1)=0\;$, $f_1,f_2\in \fh\,$.
\item[{\rm (iii)}] $a(f_1) a(f_2)^*+a(f_2)^*a(f_1)=\langle f_1,f_2\rangle \1\;$, $f_1,f_2\in \fh\,$.
\end{itemize}

The algebra CAR$(\fh)$ is simple, has a unique tracial state and satisfies $\|a(f)\|=\|f\|$ for any $f\in\fh$.
In the proof of the next theorem we exploit the finite approximation structure of the CAR-algebra.

\begin{proposition}
Let $\fh$ be a complex separable Hilbert space. Then CAR$(\fh)$ is a F\o lner C$^*$-algebra and its unique 
tracial state is amenable.
\end{proposition}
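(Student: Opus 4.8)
The plan is to exploit the well-known fact that $\mathrm{CAR}(\fh)$ is an AF C$^*$-algebra when $\fh$ is separable, and then to invoke Theorem~\ref{teo:alg-amenable-Foelner}. More concretely, fix an orthonormal basis $\{e_j\}_{j\in\N}$ of $\fh$ and let $\fA_n\subset\mathrm{CAR}(\fh)$ be the unital $*$-subalgebra generated by $\1$ and $a(e_1),\dots,a(e_n)$. The anti-commutation relations (ii)--(iii) show that $\fA_n$ is spanned by the $4^n$ ordered monomials in the $a(e_j)$ and $a(e_j)^*$, so $\fA_n\cong M_{2^n}(\C)$ — this is the standard Jordan--Wigner identification. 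Hence the algebraic inductive limit $\fA:=\bigcup_{n}\fA_n$ is a dense $*$-subalgebra of $\mathrm{CAR}(\fh)$ (density follows from $\|a(f)\|=\|f\|$, which lets us approximate $a(f)$ by $a$ of a finite linear combination of the $e_j$), and $\fA$ is an increasing union of finite-dimensional $*$-algebras.

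Next I would verify that $\fA$ is algebraically amenable in the sense of Definition~\ref{def:alg-amenable}. The natural candidate for the F\o lner subspaces is simply $W_k:=\fA_k$ itself, so that $\dim(W_k)=4^k$. Given any $A\in\fA$, there is some $m$ with $A\in\fA_m$; then for every $k\ge m$ we have $A\,\fA_k\subseteq\fA_k$ (since $\fA_k$ is a subalgebra containing $A$), whence $AW_k+W_k=W_k$ and the quotient $\dim(AW_k+W_k)/\dim(W_k)$ is exactly $1$ for all large $k$. Thus the defining limit is trivially $1$ for each fixed $A$, and $\fA$ is algebraically amenable. Applying Theorem~\ref{teo:alg-amenable-Foelner} to the dense $*$-subalgebra $\fA\subset\mathrm{CAR}(\fh)$ immediately gives that $\mathrm{CAR}(\fh)$ is a F\o lner C$^*$-algebra.

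It remains to identify the amenable trace with the (unique) tracial state $\tau$ of $\mathrm{CAR}(\fh)$. Since $\mathrm{CAR}(\fh)$ is simple with a unique tracial state, and the definition of a F\o lner C$^*$-algebra produces \emph{some} amenable trace, that trace must coincide with $\tau$; so $\tau$ is amenable. Alternatively one can argue directly: the finite-dimensional approximation yields a hypertrace $\psi$ on $\BH$ whose restriction to $\mathrm{CAR}(\fh)$ is a tracial state on $\mathrm{CAR}(\fh)$, which by uniqueness is $\tau$.

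The only genuinely delicate point is making the Jordan--Wigner isomorphism $\fA_n\cong M_{2^n}(\C)$ and the density of $\bigcup_n\fA_n$ rigorous from the abstract relations (i)--(iii) alone — but this is entirely standard (see, e.g., \cite[Section~5.2.2]{bBratteli02}), and once it is in place the rest of the argument is, as indicated above, essentially automatic. In particular, the F\o lner condition becomes vacuous along the chain $W_k=\fA_k$ because each $A$ eventually lies in the subalgebra $\fA_k$, so no quantitative growth estimate is needed.
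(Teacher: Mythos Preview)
Your proof is correct and follows essentially the same approach as the paper: realize $\mathrm{CAR}(\fh)$ as the UHF algebra of type $2^\infty$, take the dense $*$-subalgebra $\fA=\bigcup_n\fA_n$ with $\fA_n\cong M_{2^n}(\C)$, set $W_k:=\fA_k$, observe that each $A\in\fA$ eventually lies in $W_k$ so the F{\o}lner ratio is identically $1$, apply Theorem~\ref{teo:alg-amenable-Foelner}, and conclude by uniqueness of the tracial state. The only cosmetic difference is that the paper treats the case $\dim\fh<\infty$ separately (where $\mathrm{CAR}(\fh)\cong M_{2^n}(\C)$ is trivially F{\o}lner), whereas your choice of basis indexed by $\N$ tacitly assumes the infinite-dimensional case; you may want to add a line covering the finite-dimensional situation.
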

\begin{proof}
 If dim $\fh=n<\infty$, then CAR$(\fh)\cong M_{2^n}(\C)$ and hence F\o lner because it is finite dimensional.
 If dim $\fh=\infty$ we may describe the CAR-algebra as a uniformly hyper-finite algebra of type $2^\infty$ (see \cite[III.5.4]{bDavidson96}).
 In fact, CAR$(\fh)$ is the inductive limit of finite-dimensional algebras $\cA_n\cong M_{2^n}(\C)$ with injective 
 embedding
 \[
   \fA_n\ni A\mapsto \begin{pmatrix}
                            A & 0 \\ 0 & A
                           \end{pmatrix}
              \in  \fA_{n+1} \;.
 \]
 Consider the *-algebra $\fA:=\cup_{n=1}^\infty\fA_n$, which is dense in CAR$(\fh)$. We will prove that $\fA$ is algebraically
 amenable (cf. Definition~\ref{def:alg-amenable}) and therefore, by Theorem~\ref{teo:alg-amenable-Foelner}, we conclude
 that CAR$(\fh)$ is a F\o lner C$^*$-algebra. Define the finite dimensional subspaces (in fact subalgebras) $W_k:=\fA_k$, $k\in\N$. 
 Then, since any $A\in\fA$ is contained in $\fA_{k_0}$ for some $k_0\in\N$ we conclude that for any $k\geq k_0$ 
 we have $A W_k\subset W_k$, and therefore dim$(AW_k+W_k)=\mathrm{dim}(W_k)$ and
 \[
  \lim_{k\to\infty}\frac{\mathrm{dim}(AW_k+W_k)}{\mathrm{dim}(W_k)}=1\;.
 \]
 Finally, since CAR$(\fh)$ has a unique tracial state it must be amenable. 
\end{proof}

\subsection{Local Quantum Physics}
\label{subsec:LQP}

In this subsection we address several manifestations of infinity that appear in quantum field theory.
For this analysis we use the axiomatic approach proposed by
Haag and Kastler in the sixties using the language of operator algebras
(see, e.g., \cite{Haag64,bHaag92,RobertsIn04,bBaumgaertel95}), usually known as Algebraic Quantum Field Theory or Local Quantum Physics.
In this formulation the observables become the primary objects of the
theory and are described by selfadjoint elements in an abstract C$^*$-algebra.
 Here one considers the observables to be
localized in spacetime, which, in this article, we restrict to be the 4-dimensional Minkowski space. 
The fundamental object of study  is a {\em net of von Neumann algebras}
labeled by spacetime regions in $\R^4$. Concretely, we consider the {\em index set}
\[ 
 \cI:=\{\cO\subset\R^4 \mid \cO \;\;\mathrm{open~and~bounded~region~in~Minkowski~space}\}
\]
and a {\em net  of von Neumann algebras} is denoted by 
\[
 \cI\ni\cO\mapsto \cN(\cO)\subset \cB(\cH)\;.
\]
Associated with this net we can define the {\em global algebra} by
$
\cR: =\Big(\mathop{\cup}\limits_{\cO\in\cI}\cN(\cO)\Big)'' \;.
$

We begin by recalling the axioms of the {\em vacuum representation}.
The axioms specifying this representation of the net $\cI\ni\cO\mapsto \cN(\cO)$ are
physically motivated and have physical and mathematical consequences. These rules formalize 
general principles of relativistic quantum mechanics like, e.g., Poincar\'e covariance or causality. 
Characteristic for the vacuum state is its invariance under the Poincar\'e group and 
the (relativistic) spectrum condition. 

\begin{enumerate}
 \item[(A1)] {\em Isotony:} If $\cO_1\subset\cO_2$ then $\cN(\cO_1)\subset\cN(\cO_2)$.
 \item[(A2)] {\em Additivity:} If $\cO=\cup_j\cO_i$ then $\cN(\cO)=\Big(\cup_j\cN(\cO_j)\Big)''$.
 \begin{itemize}
  \item[(A2$'$)] {\em Weak additivity:} For each $\cO_0\in\cK$ we have 
                 $\Big(\mathop{\cup}\limits_{a\in\R^4}\cN(a+\cO_0)\Big)''=\cR$.
 \end{itemize}

 \item[(A3)] {\em Causality:} If $\cO_1\perp\cO_2$ (i.e., $\cO_1$ and $\cO_2$ are causally disjoint), then $\cN(\cO_1)\subset\cN( \cO_2)'$.
 \item[(A4)] {\em Covariance:} There is a strongly continuous unitary representation of the universal cover of the 
 proper orthocronous Poincar\'e group $\cG:=\R^4\rtimes\mathrm{SL}(2,\C)$, $U\colon\cG\to\cU(\cH)$ such that
 \[
  \cN(g\cO)=\alpha_g(\cN(\cO))=U(g)\cN(\cO) U(g)^{-1}\;,\quad \alpha_g\in\mathrm{Aut}\,{\cR}\;,\; g\in\cG\;.
 \]

\item[(A5)] {\em Spectrum condition:} The spectrum of the generators of the space-time translations is contained 
in the closed forward light cone, i.e., 
\[ 
 \sigma\Big(U(\R^4)\Big)\subset V_+ \;.
\]
\

 \item[(A6)] {\em Existence of a vacuum vector:} There exists a unit vector $\Omega\in\cH$ (called the vacuum vector) such that
 \[
  \Big(\cup_{\cO\in\cK}\cN(\cO)\Big)\Omega \;\;\mathrm{is~dense~in~}\cH\quad\mathrm{and}\quad U(g)\Omega=\Omega\;,g\in\cG\;.
 \]
\end{enumerate}

For concrete examples of nets satisfying these axioms we refer to the free-net construction in \cite{BJLL95,Lledo04}
as well as references therein.
An immediate and surprising consequence of this set of axioms is the so-called {\em Reeh-Schlieder Theorem}. 

\begin{theorem}
Let $\cI\ni\cO\mapsto \cN(\cO)\subset\BH$ be a net satisfying the axioms of the vacuum representation. For every nonempty region
$\cO\in\cI$ the vacuum vector $\Omega$ is cyclic and separating for $\cN(\cO)$, i.e., the set $\cN(\cO)\Omega\subset\cH$
is dense in $\cH$ and, for any local operator $N\in\cN(\cO)$, one has that $N\Omega=0$ implies $N=0$. 
\end{theorem}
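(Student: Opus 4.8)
The plan is to prove the Reeh--Schlieder theorem in its classical form, establishing cyclicity first and then deducing the separating property via causality. I would begin by fixing a nonempty region $\cO\in\cI$ and choosing a slightly smaller region $\cO_0$ whose closure is contained in $\cO$; by covariance (A4) there is a neighbourhood $\cU$ of the identity in $\cG$ such that $g\cO_0\subset\cO$ for all $g\in\cU$, so $\alpha_g(\cN(\cO_0))\subset\cN(\cO)$ for such $g$. The goal for cyclicity is to show that the closed subspace $\cM:=\overline{\cN(\cO)\Omega}$ equals $\cH$. Suppose a vector $\psi\in\cH$ is orthogonal to $\cM$; I want to conclude $\psi=0$.

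The heart of the argument is an analyticity-and-edge-of-the-wedge step. For operators $N_1,\dots,N_k\in\cN(\cO_0)$ consider the function
\[
 F(a_1,\dots,a_k):=\ip{\psi}{U(a_1)N_1 U(a_2-a_1)N_2\cdots U(a_k-a_{k-1})N_k\Omega}\;,
\]
where the $a_j\in\R^4$ are spacetime translations. Using the spectrum condition (A5) together with $U(a)\Omega=\Omega$, one shows that $F$ extends to a function holomorphic in the tube domain where each $\Im(a_{j+1}-a_j)$ lies in the open forward light cone $V_+$, and that $F$ is continuous up to the real boundary. For $a_j$ in a small enough real neighbourhood of $0$ each translated operator $U(a_j)N_j U(-a_j)=\alpha_{a_j}(N_j)$ still lies in $\cN(\cO)$, so by the orthogonality assumption $F$ vanishes on an open subset of $\R^{4k}$. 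A standard result on holomorphic functions on tube domains (the analytic completion / Schwarz-reflection type argument, essentially the one-sided edge-of-the-wedge theorem) then forces $F\equiv 0$ on the whole tube and in particular for all real $a_j$. Letting the $a_j$ range over all of $\R^4$ and using weak additivity (A2$'$), the vectors $U(a_1)N_1\cdots U(a_k-a_{k-1})N_k\Omega$ span a dense subspace of $\cH$ — here one invokes that polynomials in the translated local algebras generate $\cR$ and that $\Omega$ is cyclic for $\cR$ by (A6). Hence $\psi\perp\cH$, so $\psi=0$, proving $\cM=\cH$: the vacuum is cyclic for $\cN(\cO)$.

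To obtain the separating property, I would use causality (A3). Since $\cO$ is open and nonempty, its causal complement $\cO'$ contains some nonempty open region $\widetilde{\cO}\in\cI$, and $\cN(\widetilde{\cO})\subset\cN(\cO)'$ by (A3). By the cyclicity part just proved (applied to $\widetilde{\cO}$), the vacuum $\Omega$ is cyclic for $\cN(\widetilde{\cO})$, hence a fortiori cyclic for the larger algebra $\cN(\cO)'$. Now a general von Neumann algebra fact says that a vector cyclic for the commutant $\cN(\cO)'$ is separating for $\cN(\cO)$: if $N\in\cN(\cO)$ and $N\Omega=0$, then for every $M\in\cN(\cO)'$ we get $NM\Omega=MN\Omega=0$, and since $\cN(\cO)'\Omega$ is dense this gives $N=0$ on a dense set, so $N=0$. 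This completes the proof.

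The main obstacle is the analyticity step: making rigorous the passage from "$F$ vanishes on a real open set and is the boundary value of a function holomorphic in a forward tube" to "$F$ vanishes identically." This requires the spectrum condition to be used carefully to set up the correct tube of holomorphy for the multi-variable function, and then the multidimensional edge-of-the-wedge / analytic-completion theorem; I would either cite the standard references (e.g. Streater--Wightman or Jost) or, in keeping with the expository tone of this article, simply sketch that the spectrum condition yields holomorphy in $\{\Im(a_{j+1}-a_j)\in V_+\}$ via $U(a)=\int e^{\ii a\cdot p}\,dE(p)$ with $\mathrm{supp}\,E\subset \ol{V_+}$, and that vanishing on a real neighbourhood propagates by the identity theorem for holomorphic functions of several variables once one connects the real boundary points through the tube. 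The remaining ingredients — covariance to shift local algebras, weak additivity to exhaust $\cH$, and the commutant argument for the separating property — are routine.
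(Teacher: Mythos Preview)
The paper does not give its own proof of this theorem: after stating it, the text simply refers the reader to \cite[\S 1.3]{bBaumgaertel95} for a complete proof, noting that it ``makes explicit use of the covariance axiom, weak additivity and the spectrum condition.'' Your proposal is precisely the classical Reeh--Schlieder argument found in such references, and it lines up with the axioms the paper singles out: covariance (A4) to keep translated local operators inside $\cN(\cO)$ for small translations, the spectrum condition (A5) to obtain holomorphy of $F$ in the forward tube, and weak additivity (A2$'$) together with (A6) to conclude that the translated local operators acting on $\Omega$ span a dense subspace. Your additional use of causality (A3) to pass from cyclicity for $\cN(\widetilde\cO)$ with $\widetilde\cO\subset\cO'$ to the separating property for $\cN(\cO)$ is the standard step for that half of the statement; the paper's remark emphasises only the ingredients of the nontrivial cyclicity part. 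One small point worth tightening: the tube of holomorphy for your function $F$ also requires $\Im a_1\in V_+$ (not only the successive differences), but this is a cosmetic omission and the argument is otherwise correct and complete.
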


This result implies, in particular, that any nonzero local projection in $\cN(\cO)$ has (for any nonempty $\cO\in\cI$) 
a nonzero expectation value in the vacuum. Moreover, this result also shows that the vacuum in quantum field theory is entangled for any pair of local
algebras $\cN(\cO_1)$, $\cN(\cO_2)$ with $\cO_1\perp\cO_2$. 
We refer, e.g., to \cite[\S 1.3]{bBaumgaertel95} for a complete proof of the Reeh-Schlieder theorem
which makes explicit use of the covariance axiom, weak additivity
and the spectrum condition. For additional motivation, results and references see
\cite{Redhead95,SummersIn11}.

The next result is known as {\em Wightman's inequality}.
Let $\cO_1,\cO_2\in\cI$ and denote by $\cO_1\Subset \cO_2$ if $\cO_1\subset \cO_2$ and the distance of $\cO_1$ to the boundary 
of $\cO_2$ is positive, i.e., $\mathrm{dist}(\cO_1,\partial\cO_2)>0$.
\begin{theorem}
Let  $\cI\ni\cO\mapsto \cN(\cO)\subset\BH$ be a net satisfying the axioms of the vacuum representation and such that the global algebra
$\cR$ is non-Abelian. Then for any $\cO_1\Subset \cO_2$ we have that $\cN(\cO_1)\subsetneqq\cN(\cO_2)$.
\end{theorem}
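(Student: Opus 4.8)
The plan is to argue by contradiction. Since isotony (A1) already gives $\cN(\cO_1)\subseteq\cN(\cO_2)$, it is enough to show that the assumption $\cN(\cO_1)=\cN(\cO_2)$, for a fixed pair $\cO_1\Subset\cO_2$, forces the global algebra $\cR$ to be Abelian, which contradicts the hypothesis.

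The first step is to translate the smaller region slightly. Setting $\delta:=\mathrm{dist}(\cO_1,\partial\cO_2)>0$, and using that $\cO_2$ is open, one has $a+\cO_1\subseteq\cO_2$ for every $a\in\R^4$ with $|a|<\delta$. Isotony together with the covariance axiom (A4) then gives, for every such $a$,
\[
U(a)\,\cN(\cO_1)\,U(a)^{-1}=\cN(a+\cO_1)\subseteq\cN(\cO_2)=\cN(\cO_1)\;.
\]
The neighbourhood $\{a\in\R^4 : |a|<\delta\}$ is symmetric, so the same inclusion holds with $-a$ in place of $a$; conjugating that relation by $U(a)$ yields the reverse inclusion $\cN(\cO_1)\subseteq\cN(a+\cO_1)$, and hence
\[
\cN(a+\cO_1)=\cN(\cO_1)\qquad\text{whenever }|a|<\delta\;.
\]
Writing an arbitrary $b\in\R^4$ as a finite sum $b=a_1+\dots+a_k$ with $|a_i|<\delta$ and using $U(b)=U(a_1)\cdots U(a_k)$, one iterates this identity to obtain $\cN(b+\cO_1)=\cN(\cO_1)$ for every $b\in\R^4$.

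Next I would invoke weak additivity (A2$'$) for the region $\cO_1$:
\[
\cR=\Big(\mathop{\cup}\limits_{b\in\R^4}\cN(b+\cO_1)\Big)''=\cN(\cO_1)''=\cN(\cO_1)\;,
\]
so that in fact $\cN(b+\cO_1)=\cR$ for every $b\in\R^4$. Since $\cO_1$ is a bounded region (an element of $\cI$), one can choose $a,b\in\R^4$ for which $a+\cO_1$ and $b+\cO_1$ are causally disjoint — it suffices to translate $\cO_1$ far enough apart in a space-like direction. Causality (A3) then forces $\cR=\cN(a+\cO_1)\subseteq\cN(b+\cO_1)'=\cR'$, i.e.\ $\cR$ is Abelian, contradicting the hypothesis. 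Hence $\cN(\cO_1)\neq\cN(\cO_2)$, and with isotony $\cN(\cO_1)\subsetneqq\cN(\cO_2)$.

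I do not expect a genuinely hard step here: the argument uses only isotony, covariance, weak additivity and causality, and not the spectrum condition. The one place that calls for care is the upgrade of the inclusion $\cN(a+\cO_1)\subseteq\cN(\cO_1)$ to an equality for small $a$ — it is precisely the symmetry of the neighbourhood of $0$ that makes this step, and hence the whole proof, elementary. Were one not to exploit that symmetry, the identity $\cN(b+\cO_1)=\cN(\cO_1)$ for all $b$ could still be reached, but only through a Borchers-type analyticity lemma resting on the spectrum condition (A5), the invariance of the vacuum, and the Reeh-Schlieder property; that lemma would then be the technical heart of the argument.
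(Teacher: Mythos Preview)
Your argument is correct and is precisely the standard proof of Wightman's inequality; the paper does not supply its own proof but refers to Baumg\"artel's monograph, listing isotony, covariance and weak additivity as the ingredients, and your route uses exactly these (with causality (A3) furnishing the final contradiction --- an axiom the paper's summary omits but which is clearly needed). Your observation that the spectrum condition plays no role, and your remark about the Borchers-type alternative, are both in line with the paper's description.
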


This result implies that for each $\cO\in\cI$ the local algebras $\cN(\cO)$ are necessarily infinite dimensional, since 
for $\cO_1\Subset \cO_2$ we must have $\mathrm{dim}_\C\cN(\cO_1)<\mathrm{dim}_\C\cN(\cO_2)$. A complete proof of Wightman's inequality
can be found in \cite[\S 1.4]{bBaumgaertel95} which uses explicitly the isotony axiom as well as covariance and weak additivity.

Local algebras are not only infinite dimensional, they are typically type~$III$ (showing the highest degree 
of infiniteness). The change in relativistic quantum mechanics to a net of algebras localized in spacetime regions $\cO\in\cI$ forces
the radical change to type~$III$ (as opposed to a type~$I$ description in quantum mechanics). For specific regions such as a 
space-like wedge or for theories which, in addition, have conformal covariance, it can be even shown that
the local algebras correspond to the unique hyperfinite type~$III_1$ factor (see, e.g., \cite[Section~V.6]{bHaag92} for details).

We conclude this section mentioning Borchers property which implies that, generically, local algebras are almost type~$III$. 
This property, which is strongly based on the positivity of the energy, is enough in many applications.
For a proof we refer to \cite[\S 1.11 and 1.12]{bBaumgaertel95}. Before stating the next result, recall from Section~\ref{sec:2} that
for a von Neumann algebra is of type~$III$ all nonzero projections are equivalent to $\1$.

\begin{theorem}
Let $\cI\ni\cO\mapsto \cN(\cO)\subset\BH$ be a net satisfying the axioms of the vacuum representation, with unique vacuum vector $\Omega$.
Assume $\cO_1,\cO_2\in\cI$ satisfy $\cO_1\Subset \cO_2$ and that there exists an $\cO\in\cJ$ with $\cO\subset\cO_1^\perp\cap\cO_2$. Then
for any nonzero projection $P\in\cN(\cO_1)$ we have
\[
 P\sim\1\;\mathrm{mod}\;\cN(\cO_2)\;.
\]
\end{theorem}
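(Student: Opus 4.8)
The plan is to derive the statement from Borchers' property in the form it is usually packaged, combined with the halving/comparison theory for projections in von Neumann algebras. Recall the classical formulation of Borchers' property (property B): under the spectrum condition and the Reeh–Schlieder theorem, given $\cO_1\Subset\cO_2$, every nonzero projection $P\in\cN(\cO_1)$ is \emph{equivalent} (Murray–von Neumann, mod $\cN(\cO_2)$) to a projection whose range contains $\Omega$, and in particular $P\sim P'$ mod $\cN(\cO_2)$ for \emph{any} two nonzero projections $P,P'\in\cN(\cO_1)$. The extra geometric hypothesis—the existence of $\cO\in\cI$ with $\cO\subset\cO_1^\perp\cap\cO_2$—is precisely what is needed so that $\cN(\cO)$ sits inside $\cN(\cO_2)$ (isotony) and commutes with $\cN(\cO_1)$ (causality), which is the standard input that makes the doubling argument run inside $\cN(\cO_2)$. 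So the first step I would carry out is to invoke property B to get: any nonzero $P\in\cN(\cO_1)$ satisfies $P\sim E$ mod $\cN(\cO_2)$ for some projection $E$ with $E\Omega=\Omega$.

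Next I would upgrade "$P$ is equivalent to a projection containing $\Omega$" to "$P\sim\1$ mod $\cN(\cO_2)$". The idea is that $\1-P$ is either zero—in which case $P=\1$ and we are done—or it is again a nonzero projection in $\cN(\cO_1)$, so property B applies to it too, giving $\1-P\sim F$ mod $\cN(\cO_2)$ with $F\Omega=\Omega$. The two equivalent projections $E$ and $F$ both have $\Omega$ in their range; using the auxiliary algebra $\cN(\cO)\subset\cN(\cO_2)'\cap\cN(\cO_2)$-side (more precisely, $\cN(\cO)$ commutes with $\cN(\cO_1)$ and lies in $\cN(\cO_2)$) one produces, from the cyclicity of $\Omega$ guaranteed by Reeh–Schlieder, a partial isometry in $\cN(\cO_2)$ implementing $E\sim \1$, since a projection in a local algebra whose range contains the cyclic-and-separating vector $\Omega$ must in fact be $\1$ when the algebra has no finite part—or, more robustly, one shows $P\preceq \1-P$ and $\1-P\preceq P$ separately and applies the comparison theorem. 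Either way, the conclusion $P\sim\1$ mod $\cN(\cO_2)$ follows.

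Concretely, the cleanest route is: (1) from property B applied to $P$, get $P\sim E\geq$ (a projection containing $\Omega$); (2) observe that a nonzero projection in $\cN(\cO_2)$ whose range contains $\Omega$ cannot be a proper subprojection of $\1$ "in a finite way"—and in fact, applying property B again to $\1-E\in\cN(\cO_1)$ (if nonzero) forces $\1-E$ to also dominate, up to equivalence, a projection containing $\Omega$, which is impossible for two orthogonal projections unless the ambient algebra is properly infinite, whence one extracts $E\sim\1$ by the halving lemma stated earlier in the excerpt (a projection is infinite iff it halves, and $\1$ here is infinite because $\cN(\cO_2)$ contains the properly infinite $\cN(\cO_1)$ by Wightman's inequality plus Borchers). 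Chaining $P\sim E\sim\1$ gives the result.

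The main obstacle I anticipate is step (2): turning the "contains $\Omega$" statement into a genuine Murray–von Neumann equivalence with $\1$, rather than merely an inequality or a statement about separating vectors. This is exactly the delicate part of Borchers' original argument, where one uses the spectrum condition to build, for a vector state close to the vacuum, an isometry in $\cN(\cO_2)$ with the right range projection—the positivity of the energy enters to guarantee analyticity and hence that certain vectors cannot vanish. I would not reprove this from scratch; instead I would cite \cite[\S 1.11 and 1.12]{bBaumgaertel95} for property B in the equivalence form and present the above as the short deduction of the stated corollary from it, flagging that the geometric hypothesis on $\cO$ is what localizes the implementing partial isometry inside $\cN(\cO_2)$.
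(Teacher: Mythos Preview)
The paper does not give a proof of this theorem at all; it simply refers the reader to \cite[\S 1.11 and 1.12]{bBaumgaertel95}. Since that is exactly the citation you invoke at the end of your proposal, your final disposition and the paper's treatment coincide, and there is nothing further to compare.

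That said, the intermediate sketch you offer is muddled and would need cleaning up if you actually want to present an argument. Property~B in its standard formulation \emph{is} the statement $P\sim\1$ mod $\cN(\cO_2)$; there is no weaker preliminary version ``$P\sim E$ with $E\Omega=\Omega$'' from which one then upgrades. The genuine Borchers argument uses the spectrum condition (analyticity in the time translations) together with the Reeh--Schlieder theorem to directly construct an isometry $W\in\cN(\cO_2)$ with $WW^*=P$. Your step~(2) also contains a slip: you write ``applying property~B again to $\1-E\in\cN(\cO_1)$'', but $E$ was produced in $\cN(\cO_2)$, not $\cN(\cO_1)$, so there is no reason $\1-E$ lies in the smaller algebra and the hypothesis of property~B does not apply to it. If you want more than a bare citation, drop the two-stage detour and sketch the direct isometry construction instead.
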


As an application of type~$III$ structure appearing in quantum field theory we refer, e.g., to the explanation of Fermi's two atom system
(cf. \cite{BY04,Yngvason05}).

\subsection{The theory of superselection sectors}
\label{subsec:SST}

The theory of superselection sectors allows from an analysis of a physically motivated family of states
to understand three central aspects in elementary particle physics: the composition of charges, the classification 
of particle statistics and the charge conjugation. In this final subsection we will mention briefly 
the role that Cuntz algebras play in this frame, confirming again the importance of properly infinite C*-algebras
in quantum field theory. The theory of superselection sectors
as stated by the Doplicher-Haag-Roberts~(DHR) selection criterion
\cite{bHaag92,DHR69a,DHR69b}, is formulated in the frame of local quantum physics and
led to a profound body of work, culminating in
the general Doplicher-Roberts~(DR) duality theory for compact groups
\cite{Doplicher89b}.

The DHR criterion selects a distinguished class of ``admissible''
representations of a quasilocal algebra $\cA$ of observables,
which has trivial center
$\cZ:=\cZ(\cA)=\C\1$.
This class of representations specifies a so-called DR-category $\cT$,
which is a full subcategory of the category of
endomorphisms of the C*-algebra $\cA$. 
Furthermore, from this endomorphism category $\cT$ the
DR-analysis constructs a C*-algebra $\cF\supset\cA$ together with a compact group action 
$\alpha\colon\cG\ni g\to\alpha_{g}\in\mathrm{Aut}(\cF)$
such that:
\begin{itemize}
\item
$\cA$ is the fixed point algebra of this action;
\item
$\cT$ coincides with the category of all ``canonical
endomorphisms" of $\cA$, associated with the pair
$\{\cF,\alpha_{\cG}\}$.
\end{itemize}

Physically, $\cF$ is identified as a field algebra 
and $\cG$ with a global gauge group of the system.
The pair $\{\cF,\alpha_{\cG}\}$, which we call 
{\em Hilbert C*-system} (see below for a precise definition),
is uniquely determined by $\cT$ up to isomorphisms. Conversely, $\{\cF,\alpha_{\cG}\}$
determines uniquely its category of all canonical endomorphisms.
Therefore $\{\cT,\cA\}$ can be seen as the abstract side
of the representation category of a compact group, while 
$\{\cF,\alpha_{\cG}\}$ corresponds
to the concrete side of the representation category of 
$\cG$, and, roughly, any irreducible representations
of $\cG$ is explicitly realized within 
the Hilbert C*-system.
One can state the equivalence of the ``selection
principle", given by
$\cT$
and the ``symmetry principle", given by the compact group
$\cG$. This is one of the crucial theorems of the
Doplicher-Roberts theory (see also \cite{bBaumgaertel95,FredenhagenIn92,RobertsIn04}
for additional results and motivation). 

We conclude explaining the structure of Hilbert C*-systems.
These are, roughly speaking, a very special  
type of C*-dynamical system $(\mathcal{F}, \alpha_{\cG})$ that,
in addition, contain the information of the representation category
of the compact group $\mathcal{G}$. We denote the dual object of $\cG$ 
by $\widehat{\cG}$, which is defined as the set of (unitary) equivalence
classes of continuous irreducible unitary representations of $\cG$ (on complex separable Hilbert spaces).
A Hilbert space $\cH\subset\cF$, 
where $\cF$ is a unital C*-algebra, is called {\it algebraic} if the scalar product 
$\langle\cdot,\cdot\rangle$ of $\cH$
is given by $\langle A,B\rangle\1 := A^*B$ for $A,\; B\in\cH$. Henceforth,
we consider only finite-dimensional algebraic Hilbert spaces. 
The support of $\cH$ is defined by
$\hbox{supp}\,\cH:=\sum_{j=1}^{d}\Psi_j\Psi_{j}^*$,
where $\{\Psi_j\,\big|\,
j=1,\ldots,\,d\}$ is any orthonormal basis of $\cH$. 
We consider here only algebraic Hilbert space $\cH$
with ${\rm supp}\,\cH=\1$. For any $D\in\widehat{\cG}$ consider the 
following projection on $\cF$
\[
 \Pi_D(\cdot):=\int_\cG \chi_D(g)\alpha_g(\cdot)\,dg\;,
\]
where $\chi_D$ is the modified character of the class $D$, i.e., 
$\chi_D(g):=\dim (D) \; \mathrm{Tr}(D(g))$. The subspaces $\Pi_D$, $D\in\widehat{\cG}$,
are called spectral subspaces of $\cF$. Note that if one chooses the trivial representation
$\iota\in\widehat{\cG}$, then the corresponding spectral subspace is the fixed point algebra
\[
 \Pi_\iota(\cF):=\{A\in\cF\mid \alpha_g(A)=A\;,\;g\in\cG\}\;,
\]
which in our context turns out to coincide with the C*-algebra $\cA$.

\begin{definition}\label{def:HCS}
A C*-dynamical $\{\cF,\alpha_{\cG}\}$ with a compact group $\cG$ is called a {\bf Hilbert C*-system} if
for each $D\in\widehat{\cG}$
there is an algebraic Hilbert space 
$\cH_{D}\subset\Pi_{D}\cF,$
such that $\alpha_{\cG}$
acts invariantly on $\cH_{D},$  
and the unitary representation
$\alpha_\cG|_{\cH_{D}}$ 
is in the equivalence class 
$D\in\widehat{\cG}.$
\end{definition}

Note that any algebraic Hilbert space $\cH_D$, $D\in\widehat{\cG}$, generates 
a Cuntz algebra $\cO_n$ with $n=\dim D$ which are all subalgebras of the field algebra
$\cF$. Moreover, any algebraic Hilbert space $\cH_D$ specifies a canonical endomorphism 
of the fixed point algebra by
\[
 \rho_D(A)=\sum_{i=1}^n \Psi_i A \Psi_i^*\;,
\]
where  $\{\Psi_i\,\big|\,
i=1,\ldots,\,n\}$ is any orthonormal basis of $\cH_D$. Since the supp$\cH_D=\1$ the canonical 
endomorphisms are also unital, i.e., $\rho_D(\1)=\1$.

\begin{remark}
In the DR-theory the center $\cZ$ of the C*-algebra $\cA$ 
plays a special role. If $\cA$ corresponds
to the inductive limit of a net of local C*-algebras indexed by
open and bounded regions of Minkowski space, then the triviality
of the center of $\cA$ is a consequence of standard assumptions
on the net of local C*-algebras. But, in general, the C*-algebra appearing 
in the DR-theorem does not need to be a quasilocal algebra and, in fact,
one has to assume explicitly that $\cZ=\C\1$
in this context (see \cite[Theorem~6.1]{Doplicher89b}).
Therefore from a systematical point of view it is natural to study the
properties and structural modifications of this
rich theory if one assumes
the presence of a nontrivial center $\cZ\supset\C\1$.
From a physical point of view one can interpret the elements of the center $\cZ$ of $\cA$ as
classical observables contained in the quasilocal algebra. Nevertheless the effect of the
presence of classical observables in superselection theory requires a more careful analysis
of the corresponding fundamental axioms.
We refer to \cite{Lledo01a,Lledo97b,BL04}
for an analysis of the DR-duality theory in the case the 
relative commutant of the corresponding
Hilbert C*-system satisfies the following minimality condition:
\[
 \cA'\cap\cF=\cZ\;.
\]
Concrete realization of these systems in terms of Cuntz-Pimsner algebras, a class of properly infinite
C*-algebras generalizing Cuntz algebras, can be found in \cite{LV09}.
\end{remark}

\end{document}